\begin{document}
	
	\newtheorem{thm}{Theorem}
	\newtheorem{lemma}{Lemma}
	\newtheorem{prop}{Proposition}
	\newtheorem{cor}{Corollary}
	\theoremstyle{definition}
	\newtheorem{defn}{Definition}
	\newtheorem{remark}{Remark}
	\newtheorem{step}{Step}

	\newcommand{\Cov}{\mathop {\rm Cov}}
	\newcommand{\Var}{\mathop {\rm Var}}
	\newcommand{\E}{\mathop {\rm E}}
	\newcommand{\const }{\mathop {\rm const }}
	\everymath {\displaystyle}

	\newcommand{\ruby}[2]{
		\leavevmode
		\setbox0=\hbox{#1}
		\setbox1=\hbox{\tiny #2}
		\ifdim\wd0>\wd1 \dimen0=\wd0 \else \dimen0=\wd1 \fi
		\hbox{
			\kanjiskip=0pt plus 2fil
			\xkanjiskip=0pt plus 2fil
			\vbox{
				\hbox to \dimen0{
					\small \hfil#2\hfil}
				\nointerlineskip
				\hbox to \dimen0{\mathstrut\hfil#1\hfil}}}}

	\def\qedsymbol{$\blacksquare$}

	\renewcommand{\refname }{References}
	
	\title{
		A Note on Subadditivity of Value at Risks (VaRs): A New Connection to Comonotonicity
		\footnote{
			%Forthcoming in {\it Journal of Applied Probability}.
			This is the preprint version of the paper published in {\it Journal of Applied Probability} (First View).}
	}
	
	\author{Yuri Imamura
		\thanks{Department of Business Economics, School of Management, 
			Tokyo University of Science, 
			1--11--2 Fujimi, Chiyoda-ku, Tokyo 102-0071, Japan, 
			E-mail: \texttt{imamuray@rs.tus.ac.jp}}
		\and Takashi Kato
		\thanks{Association of Mathematical Finance Laboratory (AMFiL), 
			2--2--15 Minamiaoyama, Minato-ku, Tokyo 107-0062, Japan, 
			E-mail: \texttt{takashi.kato@mathfi-lab.com}}
		}
	
	\date {}
	\maketitle 

	\begin{abstract}
		In this paper, we provide a new property of value at risk (VaR), which is a standard risk measure that is widely used in quantitative financial risk management. We show that the subadditivity of VaR for given loss random variables holds for any confidence level if and only if those are comonotonic. 
		This result also gives a new equivalent condition for the comonotonicity of random vectors. 
	\end{abstract}
	
\section{Introduction}\label{sec_intro}

Value at risk (VaR) is a standard risk measure used to capture tail risks in quantitative financial risk management. 
Mathematically, VaR is defined as the left continuous version of the generalized inverse function $F^{-1}_X$ of the distribution function $F_X(x) = P(X \leq x)$ of a random variable $X : \Omega \rightarrow \mathbb {R}$, which represents an uncertain amount of financial loss, that is, 
\begin{align*} 
	\mathrm {VaR}_\alpha (X) := \inf \{x\in \mathbb {R}\ ; \ F_X(x)\geq \alpha \} \equiv F^{-1}_X(\alpha ), 
\end{align*}
where $\alpha \in (0, 1)$ represents the confidence level of a risk manager and $(\Omega , \mathcal {F}, P)$ is a given probability space. 
VaR is widely used in financial practice because it is tractable and easy to understand. 
However, VaR lacks subadditivity, which is one of the important properties for a risk measure to appropriately  reflect the effect of risk diversification. 
Here, a risk measure $\rho $ is called subadditive if and only if 
\begin{align}\label {eq_subadditivity}
	\rho (X+Y) \leq \rho (X) + \rho (Y)
\end{align}
for any random variables $X, Y$ defined on $(\Omega , \mathcal {F}, P)$. 
Note that we can easily find an example where the subadditivity of VaR is violated. For instance, if we assume that $p, q\in (0, 1)$ 
satisfy $(1-p)(1-q) \leq \alpha < 1 - \max \{p, q\}$ for a given $\alpha \in (0, 1)$, we see that 
$$
\mathrm {VaR}_\alpha (X + Y) > \mathrm {VaR}_\alpha (X)  + \mathrm {VaR}_\alpha (Y) 
$$
for a random variable $X$ (resp., $Y$) whose distribution is given by the Bernoulli distribution with parameter $p$ (resp., $q$). 

In this paper, we consider the subadditivity of VaRs from another perspective. Specifically, we investigate a pair of random variables $(X, Y)$ satisfying the subadditivity (\ref {eq_subadditivity}) with $\rho = \mathrm {VaR}_\alpha $ for ``any confidence level $\alpha \in (0, 1)$.'' 
More generally, we say that VaR is subadditive for a random vector $(X_1, \ldots , X_n)$ if and only if the inequality 
\begin{align}\label {eq_VaR_subadditivity}
	\mathrm {VaR}_\alpha \left(\sum ^n_{i=1}X_i\right) \leq \sum ^n_{i=1}\mathrm {VaR}_\alpha (X_i), \ \ \alpha \in (0, 1)
\end{align}	holds. 
Then we show that such a random vector $(X_1, \ldots , X_n)$ is limited to the case where $(X_1, \ldots , X_n)$ is comonotonic. 
%Then we show that such a pair $(X, Y)$ is limited to the case where $(X, Y)$ is comonotonic. 
This result also gives a new equivalent condition for the comonotonicity of random vectors. 

\section{Main results}\label {sec_main}
To begin, we introduce the definition of the comonotonicity of random vectors. 

\begin{defn}Let $n\in \mathbb {N}$. 
	\begin{itemize}
		\item [(i)] \ A set $A\subset \mathbb {R}^n$ is called comonotonic if and only if for any $(x_1, \ldots , x_n), (y_1, \ldots , y_n)\in A$ with $x_i < y_i$ for some $i = 1, \ldots , n$, it follows that $x_j \leq y_j$ for each $j = 1, \ldots , n$. 
		\item [(ii)] \ A random vector $(X_1, \ldots , X_n)$ is called comonotonic if and only if the support of the distribution of $(X_1, \ldots , X_n)$ is a comonotonic set. 
	\end{itemize}
\end{defn}

The following theorem is our main result: 

\begin{thm}\label {th_main}
	Assume that $(\Omega , \mathcal {F}, P)$ is atomless and let $X_1, \ldots , X_n$ be integrable random variables. Then the following statements are equivalent: 
	\begin{itemize}
		\item [$\mathrm {(i)}$] \ $(\ref {eq_VaR_subadditivity})$ holds
		\item [$\mathrm {(ii)}$] \ $(X_1, \ldots , X_n)$ is comonotonic. 
	\end{itemize}
\end{thm}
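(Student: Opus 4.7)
The implication $\mathrm{(ii)} \Rightarrow \mathrm{(i)}$ is essentially the classical comonotonic additivity of VaR. Since $(\Omega, \mathcal{F}, P)$ is atomless, a comonotonic vector $(X_1,\ldots,X_n)$ admits the common representation $X_i = F_{X_i}^{-1}(U)$ for a single uniform $U$; then $\sum_i X_i = g(U)$ with $g := \sum_i F_{X_i}^{-1}$ non-decreasing and left-continuous, so $\mathrm{VaR}_\alpha(\sum_i X_i) = g(\alpha) = \sum_i \mathrm{VaR}_\alpha(X_i)$ for every $\alpha$, and $(\ref{eq_VaR_subadditivity})$ in fact holds with equality.

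The substantive direction is $\mathrm{(i)} \Rightarrow \mathrm{(ii)}$, which I would attack in two stages. The first is an integration trick: using the identity $\int_0^1 \mathrm{VaR}_\alpha(X)\,d\alpha = \E[X]$ valid for any $X \in L^1$, integration of the assumed inequality $(\ref{eq_VaR_subadditivity})$ over $\alpha \in (0,1)$ gives $\sum_i \E[X_i] \leq \sum_i \E[X_i]$. Hence the pointwise non-negative gap $\alpha \mapsto \sum_i \mathrm{VaR}_\alpha(X_i) - \mathrm{VaR}_\alpha(\sum_i X_i)$ integrates to zero and so vanishes for Lebesgue-a.e.\ $\alpha$; and by the left-continuity of $\alpha \mapsto \mathrm{VaR}_\alpha(\cdot)$ the equality in fact holds throughout $(0,1)$.

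The second stage converts this quantile-additivity into comonotonicity. Introducing a comonotonic coupling $X_i^c := F_{X_i}^{-1}(U)$ for a common uniform $U$ and invoking $\mathrm{(ii)} \Rightarrow \mathrm{(i)}$ yields $\mathrm{VaR}_\alpha(\sum_i X_i^c) = \sum_i \mathrm{VaR}_\alpha(X_i) = \mathrm{VaR}_\alpha(\sum_i X_i)$ for every $\alpha$, so $\sum_i X_i$ and $\sum_i X_i^c$ share the same distribution. The remaining implication --- that this equality in distribution of the sums, together with matching marginals, forces $(X_1,\ldots,X_n)$ itself to be comonotonic --- is the main obstacle of the proof. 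For square-integrable $X_i$ it is classical: matching variances and the rearrangement inequality force each $\Cov(X_i,X_j)$ to equal its maximum value $\Cov(X_i^c,X_j^c)$, whereupon Hoeffding's covariance identity combined with the Fr\'echet upper bound $P(X_i \leq s,\, X_j \leq t) \leq \min(F_{X_i}(s),\,F_{X_j}(t))$ forces the joint cdf of each pair to saturate the Fr\'echet bound, so every pair $(X_i,X_j)$ is comonotonic, and the elementary fact that pairwise coordinatewise comparability implies joint coordinatewise comparability in $\mathbb{R}^n$ finishes the argument. Extending from $L^2$ to merely integrable marginals --- e.g.\ via a bounded truncation preserving the quantile additivity of the sum, or by a direct argument with bounded test functions of $X_i$ --- is the delicate technical step that I would anticipate having to carry through carefully.
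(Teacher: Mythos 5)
Your proposal follows essentially the same route as the paper's proof, and the parts you actually carry out are correct. Your first stage---integrating $(\ref{eq_VaR_subadditivity})$ over $\alpha$, using $\int_0^1 \mathrm{VaR}_\alpha(X)\,d\alpha = \E[X]$, and concluding that the nonnegative gap vanishes---is exactly the paper's argument written in quantile-function language: the paper instead chooses a uniform $U$ with $S = F_S^{-1}(U)$ a.s., sets $\hat{X}_i = F_{X_i}^{-1}(U)$, and notes that $\hat{S}-S \geq 0$ a.s.\ with mean zero, hence $S = \hat{S}$ a.s. Either way one arrives at your intermediate conclusion that $\sum_i X_i$ has the same distribution as the comonotonic sum $\sum_i F_{X_i}^{-1}(U)$. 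The divergence is in the final step. The paper does not reprove the implication ``sum distributed as the comonotonic sum, with given marginals, implies comonotonicity''; it invokes it as the known characterization (E4) (Cheung's convex-order characterization, with the integrable case due to Mao and Hu), applied once to $(\hat{X}_1,\ldots,\hat{X}_n)$ to get $\sum_i \tilde{X}_i \leq_{\mathrm{cx}} \hat{S} = S$ in distribution, and then in the reverse direction to conclude that $(X_1,\ldots,X_n)$ is comonotonic. You instead attempt to prove this implication from scratch via covariances.

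That final step is the one genuine gap in your write-up as a self-contained proof, and you correctly flag it. The variance/Hoeffding-identity argument is fine under square-integrability, but it intrinsically needs second moments, and the truncation repair you suggest is not routine: the truncation of the sum is not the sum of the truncations, so it is not clear that the distributional identity $S \overset{d}{=} S^c$ (or the quantile additivity) survives truncation of the marginals. The standard way to close the integrable case is not by covariances at all but by stop-loss transforms: $S \overset{d}{=} S^c$ gives $\E[(S-c)^+] = \E[(S^c-c)^+]$ for every $c$, the comonotonic stop-loss decomposes additively over suitable thresholds $c_i$ with $\sum_i c_i = c$, and the resulting chain of equalities in $(\sum_i (X_i - c_i))^+ \leq \sum_i (X_i-c_i)^+$ forces comonotonicity; this is precisely the Mao--Hu proof of Cheung's characterization that the paper cites. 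So your proposal matches the paper's strategy, but where the paper outsources the last step to a cited theorem valid in $L^1$, your version re-derives it only under a stronger moment hypothesis and leaves the general case open.
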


\begin{remark}[Equivalent conditions for comonotonicity]
	The following properties are well known as equivalent conditions for the comonotonicity of a random vector $(X_1, \ldots , X_n)$ (see \cite {Dhaene-et-al1, Dhaene-et-al2, McNeil-Frey-Embrechts}). 
	\begin{itemize}
		\item [(E1)] The dependency structure of $(X_1, \ldots, X_n)$ is represented by the following copula: 
		$$
		M(u_1, \ldots , u_n) = \min \{u_1, \ldots , u_n\}, \ \ u_1, \ldots , u_n\in [0, 1]. 
		$$
		\item [(E2)] There exists a uniformly distributed random variable $U$ such that $(X_1, \ldots , X_n)$ and $(F^{-1}_{X_1}(U), \ldots , \allowbreak F^{-1}_{X_n}(U))$ have the same distribution. 
		\item [(E3)] There exist a random variable $Z$ and non-decreasing functions $f_1, \ldots , f_n$ such that $(X_1, \ldots , \allowbreak X_n)$ and $(f_1(Z), \ldots , f_n(Z))$ have the same distribution. 
	\end{itemize}
	%if 
	If $X_1, \ldots , X_n$ are integrable, the comonotonicity of $(X_1, \ldots , X_n)$ is equivalent to the following condition \cite {Cheung1, Cheung2, Mao-Hu}: 
	\begin{itemize}
		\item [(E4)] Each random vector 
		$(\tilde {X}_1, \ldots , \tilde {X}_n)\in \mathcal {R}(F_{X_1}, \ldots , F_{X_n})$ satisfies 
		$$
		\sum ^n_{i=1}\tilde {X}_i \leq _\mathrm {cx} \sum ^n_{i=1}X_i, 
		$$
		where $\mathcal {R}(F_{X_1}, \ldots , F_{X_n})$ is the Fr\'echet class with marginal distributions $F_{X_1}, \ldots , F_{X_n}$; that is, 
		$$
		\mathcal {R}(F_{X_1}, \ldots , F_{X_n}) = \left\{(\tilde {X}_1, \ldots , \tilde {X}_n)\ ; \ F_{\tilde {X}_i} = F_{X_i}, \ i = 1, \ldots , n\right\}, 
		$$
		and we write $X\leq _\mathrm {cx} Y$ for random variables $X$ and $Y$ if $\mathrm {E}[X] = \mathrm {E}[Y]$ and
		$$
		\mathrm {E}\left[\max \left\{X - c, 0\right\}\right] \leq \mathrm {E}\left[\max \left\{Y - c, 0\right\}\right], \ \ c\in \mathbb {R}.
		$$
		
	\end{itemize}
	Furthermore, for the square-integrable case, the Pareto optimality for the convex order $\leq _\mathrm {cx}$ and the optimal coupling condition under the Wasserstein distance (when $n=2$) have been separately studied as equivalent conditions for the comonotonicity. See \cite {Cuestaalbertos-et-al, Denuit-et-al} for details. 
	
	Our result implies that $(\ref {eq_VaR_subadditivity})$ is also equivalent to the comonotonicity of $(X_1, \ldots , X_n)$. 
	Moreover, we see that (\ref {eq_VaR_subadditivity}) implies
	\begin{align}\label {eq_VaR_additivity}
		\mathrm {VaR}_\alpha \left(\sum ^n_{i=1}X_i\right) = \sum ^n_{i=1}\mathrm {VaR}_\alpha (X_i), \ \ \alpha \in (0, 1). 
	\end{align}
	due to Theorem \ref{th_main} and the comonotonic subadditivity of VaRs, hence (\ref {eq_VaR_subadditivity}) and (\ref {eq_VaR_additivity}) are equivalent. 
\end{remark}

\begin{remark}[Elliptic distributions]
	Elliptic distributions are known to be consistent with the subadditivity of VaRs. 
	That is, if $(X_1, \ldots , X_n)$ is elliptically distributed, the subadditivity of VaRs 
	\begin{align}\label {eq_VaR_subadditivity_alpha}
		\mathrm {VaR}_\alpha \left(\sum ^n_{i=1}X_i\right) \leq \sum ^n_{i=1}\mathrm {VaR}_\alpha (X_i)
	\end{align} 
	is 
	%always 
	guaranteed; thus, we can use the VaR as if it is a coherent risk measure when $(X_1, \ldots , X_n)$ follows an elliptic distribution, including normal distributions and t-distributions. 
	However, we must keep in mind that (\ref {eq_VaR_subadditivity_alpha}) does not always hold for ``all'' $\alpha \in (0, 1)$. 
	
	Indeed, Theorem 1 in \cite {Embrechts-McNeil-Straumann} implies that if $(X_1, \ldots , X_n)$ is elliptically distributed and square-integrable, (\ref {eq_VaR_subadditivity_alpha}) holds for each $\alpha \in [1/2, 1)$. 
	Similarly to the proof of Theorem 1 in \cite {Embrechts-McNeil-Straumann}, if $\alpha \in (0, 1/2]$, we can verify the opposite inequality; that is, we see the ``superadditivity'' of the VaR: 
	$$
	\mathrm {VaR}_\alpha \left(\sum ^n_{i=1}X_i\right) \geq \sum ^n_{i=1}\mathrm {VaR}_\alpha (X_i), \ \ \alpha \in \left (0, 1/2\right]. 
	$$
	Thus, if we further assume (\ref {eq_VaR_subadditivity}), we obtain
	$$
	\mathrm {VaR}_\alpha \left(\sum ^n_{i=1}X_i\right) = \sum ^n_{i=1}\mathrm {VaR}_\alpha (X_i), \ \ \alpha \in (0, 1/2]. 
	$$
	This implies that 
	\begin{align}\label {eq_temp_elliptic}
		\sigma _i\sigma _j(1 - \rho _{ij}) = 0, \ \ i, j = 1, \ldots , n, 
	\end{align}
	where $\sigma _i$ is the standard deviation of $X_i$ and $\rho _{ij}$ is the correlation of $X_i$ and $X_j$. 
	In other words, for any $i, j = 1, \ldots , n$, it holds that ``if both $X_i$ and $X_j$ are non-deterministic, $\rho _{ij} = 1$.'' 
	Then we can verify that $(X_1, \ldots , X_n)$ is comonotonic. 
	Indeed, if given $i = 1, \ldots, n$ such that $\sigma _i > 0$ and letting $Z = (X_i - \mathrm {E}[X_i])/\sigma _i$ and $f_j(z) = \mathrm {E}[X_j] + \sigma _jz$ for each $j = 1, \ldots , n$, we see that $(X_1, \ldots , X_n)$ has the same distribution as $(f_1(Z), \ldots , f_n(Z))$. 
\end{remark}

\begin{remark}[Integrability condition]
	In Theorem \ref {th_main}, the integrability of $X_1\ldots , X_n$ is an essential assumption.
	Indeed, if $X_1, \ldots , X_n$ are not integrable, that is, if their distributions are sufficiently fat-tailed,there exists a counterexample where $(\ref {eq_VaR_subadditivity})$ holds while $(X_1, \ldots , X_n)$ is not comonotonic\footnote{The authors thank Prof. Ruodu Wang for pointing this out.}. 
	For instance, if $X_1, \ldots , X_n$ are ``independent'' and identically distributed random variables whose marginal distributions are the sign-inverted Pareto distribution 
	\begin{align*}
		F_{-X_1}(x) = \cdots = F_{-X_n}(x) = 1 - x^{-\gamma } \ (x\geq 1), \ \ 0\ (x < 1)
	\end{align*}
	with $\gamma \in (0, 1]$, we see that (\ref {eq_VaR_subadditivity}) is valid for $X_1, \ldots , X_n$. 
	More generally, Theorem 1 in \cite {Chen-Embrechts-Wang} implies that if $-X_1, \ldots , -X_n$ are weakly negatively associated and identically distributed super-Pareto, $X_1, \ldots , X_n$ satisfy (\ref {eq_VaR_subadditivity}) (also see the arguments below Proposition 2 in \cite {Chen-Embrechts-Wang}). 
\end{remark}

\begin{proof}[Proof of Theorem \ref {th_main}]
	It suffices to show that (i) implies (ii). Put $S = X_1 + \cdots + X_n$. 
	Because $(\Omega , \mathcal {F}, P)$ is atomless, Lemma A.32 in \cite {Foellmer-Schied} implies that there is a uniformly distributed random variable $U$ such that $S = F^{-1}_{S}(U)$ a.s. Now we define 
	$$
	\hat{X}_i = F^{-1}_{X_i}(U), \ \ i = 1, \ldots , n. 
	$$
	Then, Lemma A.23 in \cite {Foellmer-Schied} implies that the distribution of $\hat {X}_i$ is the same as that of $X_i$ for each $i = 1, \ldots , n$. 
	Thus, we get
	\begin{align}\label {eq_exp_Z}
		\mathrm {E}[Z] = 0, 
	\end{align}
	where $Z = \hat{S} - S$ and $\hat{S} = \hat {X}_1 + \cdots + \hat{X}_n$. 
	
	Meanwhile, from condition (i), we have that 
	$$
	F^{-1}_{S}(\alpha ) \leq \sum ^n_{i=1}F^{-1}_{X_i}(\alpha ), \ \ \alpha \in (0, 1). 
	$$
	Substituting $\alpha = U$, we get 
	$$
	S = F^{-1}_{S}(U) \leq \hat{S} \ \  \mbox {a.s.}, 
	$$
	so $Z\geq 0$ a.s. Combining this with (\ref{eq_exp_Z}), we see that $Z$ must be equal to zero a.s.; that is, 
	\begin{align}\label {eq_S_same}
		\hat{S} = S \ \ \mbox {a.s.}
	\end{align}
	
	Because $(\hat {X}_1, \ldots,  \hat {X}_1)$ is comonotonic, condition (E4) holds for $(\hat {X}_1, \ldots , \hat {X}_n)$ and thus 
	\begin{align*}%\label {eq_cx}
		\sum ^n_{i=1}\tilde {X}_i \leq _\mathrm {cx} \hat {S}, \ \ (\tilde {X}_1, \ldots , \tilde {X}_n)\in \mathcal {R}(F_{X_1}, \ldots , F_{X_n}). 
	\end{align*}
	Combining this with (\ref {eq_S_same}) and using (E4) again, we conclude that $(X_1, \ldots , X_n)$ is comonotonic. 
	\end{proof}

	\paragraph{\bf Acknowledgments:}
	The research was partially supported by JSPS KAKENHI Grant Number 20K03731. The support is gratefully acknowledged.

\end{document}